\newtheorem{thm}{Theorem}[section]
\theoremstyle{definition}
\theoremstyle{remark}
\newtheorem{rem}[thm]{Remark}
\numberwithin{equation}{section}
\newcommand*{\Hom}{\operatorname{Hom}}
\newcommand*{\N}{\mathbb N}
\def\kap{KAP}
\def\ckap{CKAP}
\def\eckap{ECKAP}
\begin{document}
\title{Categorical interpretations of some key agreement protocols}
\author{Nick Inassaridze}
\address{Departamento de Matem\'atica Aplicada I, Universidad de Vigo, EUIT
Forestal, 36005 Pontevedra, Spain}
\address{Tbilisi Centre for Mathematical Sciences, Chavchavadze Ave.~75, 3/35, Tbilisi 0168, Georgia}
\address{A.~Razmadze Mathematical Institute of Tbilisi State University, University Street~2, Tbilisi 0143, Georgia}
\email{niko.inas@gmail.com}

\author{Tamaz Kandelaki}
\address{Tbilisi Centre for Mathematical Sciences, Chavchavadze Ave.~75, 3/35, Tbilisi 0168, Georgia}
\address{A.~Razmadze Mathematical Institute of Tbilisi State University, University Street~2, Tbilisi 0143, Georgia}
\email{kandel@rmi.acnet.ge}
\email{tam.kandel@gmail.com}

\author{Manuel Ladra}
\address{Departamento de \'Algebra, Facultad de Matem\'aticas, Universidad de Santiago de Compostela, 15782 Santiago de Compostela, Spain}
\email{manuel.ladra@gmail.com}

\thanks{The first and third authors were supported by Ministerio de Ciencia e Innovaci\'on (European FEDER support included), grant MTM2009-14464-C02, and by Xunta de Galicia, grant Incite09 207 215 PR}

\begin{abstract}
We give interpretations of some known key agreement protocols in the framework of category theory and in this way we give a method of constructing of many new key agreement protocols.
\end{abstract}

\subjclass[2010]{94A60, 94A62, 16B50}
\keywords{Key agreement protocol, category}

\maketitle

\

\section{Introduction}

Key agreement is one of the fundamental cryptographic primitives after encryption and digital signature. Key agreement protocols (\kap s) allow two or more parties to exchange information among themselves over an adversarially controlled insecure network and agree upon a common session key, which may be used for later secure communication among the parties. Thus, secure \kap s serve as basic building block for constructing secure, complex, higher level cryptographic protocols.

The first pioneering work for key agreement is the Diffie-Hellman protocol given in their seminal paper \cite{DiHe} that invents the public key cryptography and revolutionizes the field of modern cryptography. In \cite{DiHe} a two-party key agreement protocol was proposed. There have been many attempts to provide authentic key agreement based on the Diffie-Hellman protocol \cite{DiOoWi, MaTaIm, MeQuVa, Ya}.

In the last few years some efforts have been made to construct \kap\ using hard problems in infinite non-commutative groups. Here we only mention the idea based on conjugacy search problem which were reckoned as potentially hard problem for construction of one-way functions \cite{AnAnGo, KoLe}. To realize proposed algorithms the main attempts were directed to the suitable platform group selection.

Recently in \cite{SaLiTva} the \kap\ has been constructed using matrix power functions based on matrix ring action on some matrix set and generalizing the Diffie-Hellman \kap. It has been suggested that main advantage of the proposed \kap\ is considerable fast computations and avoidance of arithmetic operations with long integers.

The aim of this work is to suggest a general scheme of constructing \kap s using the category theory. We assume the reader is familiar with categories (we refer to the classical book of Mac Lane \cite{MLa} for the background in Category Theory). Based on the structure of categories, we present the above mentioned \kap s as very particular cases of our categorical \kap s. Working new examples of our categorical \kap s will be given in subsequent papers.

\;

\section{Key Agreement Protocols Related to Categories}

In this section we define \kap  s which are arisen from the structure of categories.

\subsection{\kap\ based on categories}\label{subsec:KAPC}
 Let ${\mathcal C}$ be a (non-empty) category and let $A,B$ be objects of $\mathcal C$ such that $\Hom(A,B)\neq \varnothing$. We suggest the set $\Hom(A,B)$ to be a set of possible keys, while $\Hom(A,A)$ and $\Hom(B,B)$ are monoids which can be used by Alice and Bob, respectively, for actions on $\Hom(A,B)$ if they wish to create a shared key. According to the structure of the category $\mathcal C$, Alice is able to act on the set of possible keys using the right action of $\Hom(A,A)$ on $\Hom(A,B)$. Similarly, Bob is able to act on the set of possible keys using the left action of $\Hom(B,B)$ on $\Hom(A,B)$. Let $g$ be a publicly known element of the set $\Hom(A,B)$. Then, for creating a shared key, Alice and Bob can proceed as follows:
\begin{enumerate}
\item[1.]  Alice selects at random an element $f\in \Hom(A,A)$ and computes composition $g\cdot f$, and sends it to Bob;
\item[2.] Bob selects at random an element $h\in \Hom(B,B)$ and computes composition $h\cdot g$, and sends it to Alice;
\item[3.] Alice computes $k_a=(h\cdot g)\cdot f$, while Bob computes $k_b = h\cdot (g\cdot f)$;
\item[4.] Since $(h\cdot g)\cdot f = h\cdot (g\cdot f)$, the shared key is $k=k_a = k_b\in \Hom(A,B)$.
\end{enumerate}
This protocol, based on the structure of the category $\mathcal C$, is called the categorical key agreement protocol (\ckap).

\;

\subsection{General form \kap\  based on enriched categories}\label{subsec:KAPEC}

In this subsection we give another scheme of \kap\  induced by a structure of a category, but which is enriched over the category of abelian groups, i.e. a category whose morphism sets are abelian groups satisfying some axioms (see \cite{MLa}). This construction generalizes the \kap\ given in previous subsection and motivated by some known \kap s. Namely, our approach makes it possible to interpret many known \kap s as particular cases of our construction.

Let ${\mathcal D}$ be a (non-empty) enriched category over the category of abelian groups.  Clearly, it means that for any objects $A$ and $B$ in this category  $\Hom(A,A)$ and $\Hom(B,B)$ are unital rings, $\Hom(A,B)$ is an abelian group and composition of morphisms in $\mathcal D$ is bilinear. Let $A,B$ be objects of $\mathcal D$ such that $\Hom(A,B)\neq \varnothing$.
Let $m,\;n\in \N$ be natural numbers, $\mathcal{A}_A$ and $\mathcal{B}_A$ commuting subrings of the $n\times n$-matrix ring $M_n\big(\Hom(A,A)\big)$, while $\mathcal{A}_B$ and $\mathcal{B}_B$ commuting subrings of $m\times m$-matrix ring $M_m\big(\Hom(B,B)\big)$.
Let $\varphi$ be a publicly known $m\times n$-matrix over the abelian group $\Hom(A,B)$. If Alice and Bob wish to create a common secret key, they can proceed as follows:
\begin{enumerate}
\item[1.] Alice selects at random matrices $\psi _a \in \mathcal{A}_A$ and $\omega _a\in \mathcal{A}_B$, computes product of matrices $\omega _a\cdot \varphi \cdot \psi _a$, and sends it to Bob;
\item[2.] Bob selects at random matrices $\psi _b \in \mathcal{B}_A$ and $\omega _b\in \mathcal{B}_B$, computes product of matrices $\omega _b\cdot \varphi \cdot \psi _b$, and sends it to Alice;
\item[3.] Alice computes $k_a=\omega _a\cdot \omega _b\cdot \varphi \cdot \psi _b\cdot \psi _a$, while Bob computes $k_b=\omega _b\cdot \omega _a\cdot \varphi \cdot \psi _a\cdot \psi _b$;
\item[4.]  Since $\omega _a\cdot \omega _b=\omega _b\cdot \omega _a$ and $\psi _b\cdot \psi _a=\psi _a\cdot \psi _b$, the shared secret key is the $m\times n$-matrix $k=k_a = k_b$ over the abelian group $\Hom(A,B)$.
\end{enumerate}

This protocol is called the enriched categorical key agreement protocol (\eckap).  The following assertion relates two categorical \kap s presented in this section.

\begin{thm}\label{thm:KAPEX}
There is a universal faithful functor $T$ from the category of categories to the category of enriched categories over the category of abelian groups. According to this correspondence,   any \ckap\ related to a category $\mathcal C$ can be interpreted as a \eckap\ related to the  enriched category $T({\mathcal C})$.
\end{thm}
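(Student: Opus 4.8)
The plan is to let $T$ be the free $\Z$-linearization of hom-sets. On an object $\mathcal C$ of the category of categories, $T(\mathcal C)$ has the same objects as $\mathcal C$ and hom-groups $\Hom_{T(\mathcal C)}(A,B)=\Z[\Hom_{\mathcal C}(A,B)]$, the free abelian group on the set $\Hom_{\mathcal C}(A,B)$; composition is defined on basis elements by composition in $\mathcal C$ and extended bilinearly, so that $(\sum_i n_i g_i)\cdot(\sum_j m_j f_j)=\sum_{i,j} n_i m_j\,(g_i\cdot f_j)$. On a functor $F\colon\mathcal C\to\mathcal C'$, I would let $T(F)$ agree with $F$ on objects and send each basis element $f$ to the basis element $F(f)$, extended $\Z$-linearly.

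I would then check that $T(\mathcal C)$ is enriched over abelian groups: bilinearity is built in, while associativity and the unit laws hold on basis elements because they hold in $\mathcal C$ and each side of every identity is linear in every argument, so they extend to all elements. In particular $\Hom_{T(\mathcal C)}(A,A)$ is a unital ring whose identity is the basis element $1_A$, exactly as the \eckap\ requires. Functoriality $T(\mathrm{id})=\mathrm{id}$ and $T(G\circ F)=T(G)\circ T(F)$ is immediate from the basis-element case. Faithfulness is equally direct: the assignment $f\mapsto f$ embeds $\Hom_{\mathcal C}(A,B)$ as a basis of $\Hom_{T(\mathcal C)}(A,B)$, so $F$ is recovered from $T(F)$ by restriction to basis elements, whence $T(F)=T(G)$ forces $F=G$.

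For universality I would exhibit $T$ as left adjoint to the forgetful functor $U$ from abelian-group-enriched categories to ordinary categories (same objects and hom-sets, forgetting the group structure). The unit $\eta_{\mathcal C}\colon\mathcal C\to U\,T(\mathcal C)$ sends each morphism to its corresponding basis element; given any enriched category $\mathcal D$ and any functor $\mathcal C\to U\mathcal D$, the universal property of the free abelian group, applied one hom-set at a time, yields a unique enriched functor $T(\mathcal C)\to\mathcal D$ through which it factors. This adjunction is the sense in which $T$ is universal.

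Finally, to read a \ckap\ on $\mathcal C$ as an \eckap\ on $T(\mathcal C)$, I would take $m=n=1$, so that $M_1(\Hom(A,A))=\Z[\Hom_{\mathcal C}(A,A)]$ and likewise for $B$, and set $\varphi=g$ for the public key $g$. Choosing $\mathcal A_A=\Z[\Hom_{\mathcal C}(A,A)]$, $\mathcal B_A=\Z\cdot 1_A$, $\mathcal A_B=\Z\cdot 1_B$, $\mathcal B_B=\Z[\Hom_{\mathcal C}(B,B)]$ makes the commuting conditions $\psi_a\psi_b=\psi_b\psi_a$ and $\omega_a\omega_b=\omega_b\omega_a$ hold automatically, since one factor on each end is a central scalar. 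Alice's secret $f$ then becomes $(\psi_a,\omega_a)=(f,1_B)$ and Bob's secret $h$ becomes $(\psi_b,\omega_b)=(1_A,h)$; the transmitted matrices $\omega_a\varphi\psi_a=g\cdot f$ and $\omega_b\varphi\psi_b=h\cdot g$ and the shared key $h\cdot g\cdot f$ reproduce the \ckap\ of Subsection~\ref{subsec:KAPC} exactly. No step presents a genuine obstacle; the only care needed is in the bilinear extension, so that associativity and units survive linearization, and in selecting the scalar subrings above so that the one-sided \ckap\ actions embed while the \eckap\ commutativity requirements remain satisfied.
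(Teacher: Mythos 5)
Your proposal is correct and follows essentially the same route as the paper: the free abelian group on each hom-set with bilinearly extended composition, faithfulness via the basis embedding, universality as the adjunction with the forgetful functor, and the recovery of the \ckap\ by taking $m=n=1$ with $\mathcal{B}_A$ and $\mathcal{A}_B$ the scalar subrings generated by the identities. You merely supply more detail (the explicit unit of the adjunction and the explicit assignment of secrets $(\psi_a,\omega_a)=(f,1_B)$, $(\psi_b,\omega_b)=(1_A,h)$) than the paper, which leaves these verifications to the reader.
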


\begin{proof}
We just construct the functor $T$ and omit the proof of its universality since it directly follows from the construction. In fact, for any category $\mathcal C$ define the category $T({\mathcal C})$ as follows: its objects class coincides with the objects class of $\mathcal C$, while $\Hom_{T({\mathcal C})}(A,B)$ is the free abelian group generated by the set $\Hom_{\mathcal C}(A,B)$ for any $A,B\in T({\mathcal C})$. The composition of morphisms in $T({\mathcal C})$ is obviously induced by the composition of morphisms in $\mathcal C$. Then it is easy to check that $\Hom_{T({\mathcal C})}(A,A)$, $A\in T({\mathcal C})$, is unital ring, the composition is bilinear and all axioms of enriched category satisfied. Hence the category $T({\mathcal C})$ is enriched over the category of abelian groups.

Given a category $\mathcal C$, one can obtain its corresponding \ckap\ as \eckap\ of the enriched category $T({\mathcal C})$ by assuming $m=n=1$, and $\mathcal{B}_A$ and $\mathcal{A}_B$ to be subrings of $M_n\big(\Hom(A,A)\big)$ and $M_m\big(\Hom(B,B)\big)$ generated by the unital matrices, respectively.
\end{proof}

\begin{rem}\label{rem:abmon}
In our constructions one can successfully use an enriched category over any symmetric monoidal category, e.g. over the category of abelian monoids (see Theorem \ref{Thm 3.3}).
\end{rem}

\;

\subsection{Security problem of \ckap\ and \eckap\ }

It is assumed that any \kap\ must be secure up to solving a certain mathematical problem in a reasonable length of time. One can see that \ckap\ and \eckap\ are based on the conjecture that a function defined by composition of morphisms in a category is a one-way function in general. We suggest that the security of \ckap\ and \eckap\ depends on concrete model of a given category, i.e. the cardinality of ``Hom-sets'' and non-triviality of the morphism composition. We also would like to mention that the security of our categorical \kap s is not less than the security of the Diffie-Hellman \kap\  \cite{DiHe} and Ko et al. \kap \cite{KoLe}, since they are particular cases of our \kap s (see Section \ref{sec 3}). Further discussion on the security problems will be developed in subsequent papers where the concrete implementations of our \kap s are given.

\;

\section{Interpretations of some well-known \kap s}\label{sec 3}

In this section we show that some of well-known \kap s are particular cases of our general categorical \kap s.

\subsection{Diffie-Hellman \kap\ as \ckap}

Diffie-Hellman Key Agreement Protocol is defined in \cite{DiHe}. It has the following form. Let $G$ be a cyclic group, and $g$ a generator of $G$, where both $g$ and its order $s$ are publicly known. If Alice and Bob wish to create a shared key, they can proceed as follows:
\begin{enumerate}
\item[1.]  Alice selects uniformly at random an integer $m\in [2, s-1]$, computes $g^m$, and sends it to Bob;
\item[2.] Bob selects uniformly at random an integer $n\in [2, s-1]$, computes $g^n$, and sends it to Alice;
\item[3.] Alice computes $k_m = (g^n)^m$, while Bob computes $k_n = (g^m)^n$;
\item[4.] The shared key is thus $k = k_m = k_n\in G$.
\end{enumerate}

\begin{thm}\label{thm:Diffie-Hall KAP is CatKAP}
The Diffie-Hellman  \kap\ is a \ckap\ based on a certainly constructed category. Moreover, one can interpret it as a \eckap.
\end{thm}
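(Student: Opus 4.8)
The plan is to exhibit a small category $\mathcal C$ whose associated \ckap\ reproduces Diffie-Hellman verbatim, and then to obtain the \eckap\ interpretation essentially for free from Theorem \ref{thm:KAPEX}. The guiding observation is that in the Diffie-Hellman scheme each party acts on the public element by \emph{exponentiation}, and exponentiations compose \emph{multiplicatively}: $(x^m)^n = x^{mn}$. Hence the monoid that should govern the actions is the multiplicative monoid $(\Z/s\Z,\cdot)$, not the additive group of exponents, and $\Hom(A,B)$ should be the cyclic group $G$ itself, regarded as a set on which these exponentiation maps act. Identifying the correct (multiplicative) monoid structure is really the only conceptual content of the proof; once it is found, every remaining step is a one-line verification.

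Concretely, I would let $\mathcal C$ have exactly two objects $A$ and $B$, with $\Hom(A,A)=\Hom(B,B)=\Z/s\Z$ under multiplication mod $s$ (a monoid with unit $1$), with $\Hom(A,B)=G$, and with $\Hom(B,A)=\varnothing$. Composition is declared by exponentiation: for $m\in\Hom(A,A)$, $x\in\Hom(A,B)$ and $n\in\Hom(B,B)$, set $x\cdot m = x^{m}$ and $n\cdot x = x^{n}$. First I would check that this data forms a genuine category. The two endomorphism sets are monoids by construction, and since $g$ has order $s$ the value $x^{m}$ depends only on $m\bmod s$, so the actions are well defined. The right action satisfies the associativity and unit axioms via $(x\cdot m_1)\cdot m_2 = (x^{m_1})^{m_2}=x^{m_1m_2}=x\cdot(m_1m_2)$ and $x\cdot 1 = x$, and symmetrically for the left action. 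The only mixed axiom to verify is associativity of the triple composite, $(n\cdot x)\cdot m = n\cdot(x\cdot m)$, which unwinds to $(x^{n})^{m}=(x^{m})^{n}$, i.e. $x^{nm}=x^{mn}$; this holds because $\Z/s\Z$ is commutative. All remaining instances of the axioms are vacuous, since $\Hom(B,A)=\varnothing$.

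Next I would run the \ckap\ of Subsection \ref{subsec:KAPC} on $\mathcal C$ with public element $g\in\Hom(A,B)=G$ and match it line by line with Diffie-Hellman. Alice's choice $f\in\Hom(A,A)$ is an exponent $m$, with $g\cdot f = g^{m}$; Bob's choice $h\in\Hom(B,B)$ is an exponent $n$, with $h\cdot g = g^{n}$. The two computed keys are $k_a=(h\cdot g)\cdot f=(g^{n})^{m}=g^{nm}$ and $k_b=h\cdot(g\cdot f)=(g^{m})^{n}=g^{mn}$, so the shared key $k=g^{mn}$ is precisely the Diffie-Hellman key. This settles the first assertion.

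For the ``Moreover'' part I would simply invoke Theorem \ref{thm:KAPEX}: having realised Diffie-Hellman as the \ckap\ attached to $\mathcal C$, that theorem automatically reinterprets it as the \eckap\ attached to the enriched category $T(\mathcal C)$, in which $\Hom_{T(\mathcal C)}(A,B)$ is the free abelian group on $G$ and one takes $m=n=1$. Alternatively, one can present the \eckap\ directly, noting that $G$ is already an abelian group and $\Z/s\Z$ is already a ring, so that with $\varphi=g$ and one-sided choices (for instance $\omega_a=1$, $\psi_a=m$ for Alice and $\psi_b=1$, $\omega_b=n$ for Bob) the \eckap\ recovers $g^{mn}$. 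I expect no genuine obstacle in either route; the substance lies entirely in choosing the multiplicative exponent monoid, after which the category axioms and the protocol match are routine.
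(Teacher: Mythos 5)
Your proposal is correct and follows essentially the same route as the paper: a two-object category with $\Hom(A,B)=G$, $\Hom(B,A)=\varnothing$, endomorphism monoids acting by exponentiation, and an appeal to Theorem \ref{thm:KAPEX} for the \eckap\ part. The only difference is cosmetic --- you take $\Hom(A,A)=\Hom(B,B)$ to be the multiplicative monoid $\Z/s\Z$ where the paper uses $(\N,\cdot)$ --- and your verification of the category axioms is simply more explicit than the paper's ``it is easy to see''.
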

\begin{proof}
Let us construct a category $\mathcal C$ as follows: let $\mathcal C$ have only two objects $A$ and $B$; let the morphism sets be $\Hom (A,A)={\mathbb N}$, $\Hom (B,B)={\mathbb N}$,  $\Hom(A,B)=G$ and $\Hom(B,A)=\varnothing$, where ${\mathbb N}$ is the abelian monoid of natural numbers with respect to usual product; and let the composition of morphisms be defined by the formulas
\begin{align*}
n\cdot g=g^n\quad \text{and}\quad g\cdot m=g^m, \;\; m,n\in {\mathbb N},\;g\in G.
\end{align*}
It is easy to see that the \ckap\ arisen from the structure of so defined category $\mathcal C$ is exactly the Diffie-Hellman \kap. Thanks to Theorem \ref{thm:KAPEX}, the rest of the assertion follows.
\end{proof}

\;

\subsection{Ko-Lee-Cheon-Han-Kang-Park \kap\ as \ckap} \label{subsection:Koal-protacol}

Recall Ko-Lee-Cheon-Han-Kang-Park key agreement protocol (Ko et al. \kap) given in \cite{KoLe}. Let $G$ be a non-abelian group and $H_A,\;H_B$ its commuting subgroups. Let $g$ be a publicly known element of $G$. If Alice and Bob wish to create a common secret key, they can proceed as follows:
\begin{enumerate}
\item[1.]  Alice selects at random an element $a\in H_A$, computes ${}^ag = aga^{-1}$, and
sends it to Bob;
\item[2.] Bob selects at random an element $b\in H_B$, computes ${}^bg = bgb^{-1}$, and
sends it to Alice;
\item[3.] Alice computes $k_a = {}^a{({}^bg)}$, while Bob computes $k_b = {}^b{({}^ag)}$;
\item[4.] The common secret key is $k=k_a = k_b\in G$.
\end{enumerate}

\begin{thm}
\label{Ko et al KAP}
Ko et al. \kap\ is a \ckap\ arisen from a certainly constructed category. Moreover, one can interpret it as a \eckap.
\end{thm}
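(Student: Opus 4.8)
The plan is to mirror the strategy used for the Diffie-Hellman case in Theorem~\ref{thm:Diffie-Hall KAP is CatKAP}: I would construct a category $\mathcal C$ whose \ckap\ reproduces exactly the Ko et al.\ protocol, and then invoke Theorem~\ref{thm:KAPEX} to obtain the \eckap\ interpretation for free. So the only real work is the construction of $\mathcal C$ and the verification that its right/left composition actions coincide with the conjugation actions of $H_A$ and $H_B$ on $g$.

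First I would take $\mathcal C$ to have two objects $A$ and $B$, and set the Hom-sets to encode the data of the protocol. The natural choice is $\Hom(A,B)=G$ (the set of possible keys), with $\Hom(A,A)$ and $\Hom(B,B)$ chosen so that Alice's monoid acts on $G$ by conjugation via elements of $H_A$ and Bob's monoid acts by conjugation via elements of $H_B$. Concretely I would try $\Hom(A,A)=H_A$ and $\Hom(B,B)=H_B$ (each a group, hence in particular a monoid), with $\Hom(B,A)=\varnothing$, and define the compositions by
\begin{align*}
g\cdot a = a^{-1}ga \quad\text{and}\quad b\cdot g = bgb^{-1},\qquad a\in H_A,\; b\in H_B,\; g\in G.
\end{align*}
Here the right action of $\Hom(A,A)=H_A$ on $\Hom(A,B)=G$ is conjugation, and likewise the left action of $\Hom(B,B)=H_B$. (The inverse on the right action is the standard device that turns the right conjugation action into a genuine monoid \emph{right} action, so that $(g\cdot a_1)\cdot a_2 = g\cdot(a_1 a_2)$ holds; one checks this matches the associativity requirement of a category.) The next step is to verify that these definitions satisfy the category axioms: associativity of composition, which reduces to the two associativity identities for left/right actions together with the compatibility $b\cdot(g\cdot a)=(b\cdot g)\cdot a$, i.e.\ $b(a^{-1}ga)b^{-1}=(bgb^{-1})$ conjugated appropriately — this is where the hypothesis that $H_A$ and $H_B$ \emph{commute} is used, since it guarantees the two actions are interchangeable and the associativity square commutes. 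The identity morphisms are $1\in H_A$, $1\in H_B$, and the identity $1_G$-action is the neutral element $g$ itself acting trivially.

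With $\mathcal C$ in hand, I would then run the \ckap\ of \S\ref{subsec:KAPC} on it: Alice's random $f\in\Hom(A,A)=H_A$ becomes her random $a\in H_A$, her transmitted $g\cdot f$ becomes $a^{-1}ga$ (or $aga^{-1}$, after fixing the inversion convention consistently), Bob's $h\cdot g$ becomes $bgb^{-1}$, and the common key $(h\cdot g)\cdot f = h\cdot(g\cdot f)$ becomes exactly $k_a=k_b$ of the Ko et al.\ protocol. Finally, Theorem~\ref{thm:KAPEX} gives the \eckap\ interpretation via the free-abelian-group functor $T$.

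The main obstacle I anticipate is bookkeeping with the conjugation conventions: a right monoid action by conjugation naturally wants $g\cdot a=a^{-1}ga$ whereas the protocol as written sends ${}^ag=aga^{-1}$, so I must choose the direction of the inversion carefully (or pass to the opposite monoid) to make the category axioms hold \emph{and} reproduce the protocol's transmitted values verbatim. The essential point making everything go through is that $H_A$ and $H_B$ commute, which is precisely what forces the left and right actions on $G$ to commute and hence yields the well-definedness of the shared key; without it the composition in $\mathcal C$ would fail to be associative. Once the conventions are pinned down, the verification is routine.
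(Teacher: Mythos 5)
Your proposal is correct and follows essentially the same route as the paper: a two-object category with $\Hom(A,A)=H_A$, $\Hom(B,B)=H_B$, $\Hom(A,B)=G$, $\Hom(B,A)=\varnothing$, conjugation as composition, and an appeal to Theorem~\ref{thm:KAPEX} for the \eckap\ part. The convention issue you flag is resolved in the paper exactly as you anticipate, by using the opposite multiplication $a\cdot a'=a'a$ on $\Hom(A,A)$ so that $g\cdot a=aga^{-1}$ is a genuine right action and the transmitted values match the protocol verbatim.
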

\begin{proof}
Let us construct a category $\mathcal C$ as follows: let $\mathcal C$ have only two objects $A$ and $B$; let the morphism sets be $\Hom (A,A)=H_A$, $\Hom (B,B)=H_B$, $\Hom(A,B)=G$ and $\Hom(B,A)=\varnothing$; and let the composition of morphisms be defined by the equalities
\begin{align*}
a\cdot a'= a'a,\quad b\cdot b'=bb',\quad g\cdot a=aga^{-1}\quad \text{and}\quad b\cdot g=bgb^{-1}
\end{align*}
for $a,a'\in \Hom (A,A)$, $b,b'\in \Hom (B,B)$ and $g\in \Hom (A,B)$. It is clear that the \ckap\ arisen from the category $\mathcal C$ is exactly the Ko et al. \kap. Now, using again Theorem \ref{thm:KAPEX} completes the proof.
\end{proof}

\;

\subsection{Sakalauskas-Listopadskis-Tvarijonas \kap\ as \eckap}

In \cite{SaLiTva} E. Sakalauskas, N. Listopadskis, and P. Tvarijonas defined \kap\ (Sakalauskas et al. \kap) based on matrix power function. Now we recall it but in a slightly reformulated form. Let $\mathcal S$  be a semiring and $\mathcal M$ a $\mathcal S$-semibimodule, i.e. there exist bilinear, right and left actions of $\mathcal S$ on abelian monoid $\mathcal M$ satisfying the following associative law
$$
(lm)r=l(mr)\quad l,r\in {\mathcal S},\; m\in{\mathcal M}.
$$
Let $k$ be a natural number and let $M_k({\mathcal S})$ and $M_k({\mathcal M})$ denote $k\times k$-matrix semiring over $\mathcal S$ and $k\times k$-matrix abelian monoid over $\mathcal M$, respectively. It is well known that $M_k({\mathcal M})$ is a $M_k({\mathcal S})$-semibimodule with respect to the naturally induced right and left actions by the rule of standard matrix multiplication. Let $\varphi$ be a publicly known $k\times k$-matrix in $M_k({\mathcal M})$, while ${\mathcal A}_A$ and ${\mathcal A}_B$ be two subsemirings of commuting matrices in $M_k({\mathcal S})$. If Alice and Bob wish to create a common secret key, they can proceed as follows:
\begin{enumerate}
\item[1.] Alice selects at random secret matrices $\psi_a\in {\mathcal A}_A$ and $\omega_a\in {\mathcal A}_B$, computes product of matrices $\omega_a\cdot \varphi \cdot \psi_a$ and sends it to Bob;
\item[2.] Bob selects at random secret matrices $\psi_b\in {\mathcal A}_A$ and $\omega_b\in {\mathcal A}_B$ computes product of matrices $\omega_b\cdot \varphi \cdot \psi_b$ and sends it to Alice;
\item[3.] Both parties compute the following common secret (key) matrix $k$:
$$
k=\omega_a\cdot\omega_b\cdot \varphi \cdot\psi_b\cdot \psi_a=\omega_b\cdot \omega_a\cdot \varphi \cdot \psi_a\cdot \psi_b.
$$
\end{enumerate}

\begin{thm}\label{Thm 3.3}
Sakalauskas et al. \kap\ is a \eckap\ arisen from a certainly constructed enriched category over the category of abelian monoids.
\end{thm}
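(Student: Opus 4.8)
The plan is to imitate the proofs of Theorems \ref{thm:Diffie-Hall KAP is CatKAP} and \ref{Ko et al KAP}: I will exhibit one explicitly constructed enriched category whose associated \eckap\ is, step for step, the Sakalauskas et al.\ \kap. Since a semiring and a semibimodule carry only abelian monoid structure, the enrichment must be taken over the category of abelian monoids rather than abelian groups, as permitted by Remark \ref{rem:abmon}. Concretely, I would let $\mathcal D$ be the category with exactly two objects $A$ and $B$, and with
\[
\Hom(A,A)=\mathcal S,\quad \Hom(B,B)=\mathcal S,\quad \Hom(A,B)=\mathcal M,\quad \Hom(B,A)=\varnothing,
\]
where $\mathcal S$ and $\mathcal M$ are the semiring and $\mathcal S$-semibimodule of the protocol. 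Composition on the endomorphism monoids is the multiplication of $\mathcal S$, while the two mixed compositions
\[
\mathcal M\times\mathcal S\to\mathcal M,\ (g,f)\mapsto g\cdot f=gf, \qquad \mathcal S\times\mathcal M\to\mathcal M,\ (h,g)\mapsto h\cdot g=hg
\]
are declared to be the right and left actions of $\mathcal S$ on $\mathcal M$.

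Next I would verify that $\mathcal D$ is genuinely a category enriched over abelian monoids. Each $\Hom$-set is by hypothesis an abelian monoid; taking $\mathcal S$ unital with unital actions (the standard convention for semibimodules), the unit $1\in\mathcal S$ furnishes the identity endomorphisms of $A$ and $B$; and biadditivity of the composition is exactly the bilinearity of the two actions together with the distributive laws of $\mathcal S$. The only associativity constraint that is not immediate is $(h\cdot g)\cdot f=h\cdot(g\cdot f)$ for $h\in\Hom(B,B)$, $g\in\Hom(A,B)$, $f\in\Hom(A,A)$; but under the identifications above this reads $(hg)f=h(gf)$, which is precisely the semibimodule law $(lm)r=l(mr)$ stipulated in the statement. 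Hence $\mathcal D$ is a well-defined enriched category over abelian monoids.

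I would then specialize the monoid-enriched \eckap\ of Subsection \ref{subsec:KAPEC} to $\mathcal D$ by taking $m=n=k$, the publicly known matrix to be the given $\varphi\in M_k(\mathcal M)$, and the two pairs of commuting subsemirings of $M_k(\mathcal S)$ to be $\mathcal A_A=\mathcal B_A$ and $\mathcal A_B=\mathcal B_B$, each pair set equal to the single commuting subsemiring (namely $\mathcal A_A$, respectively $\mathcal A_B$) supplied by the protocol; pairwise commutativity of their elements makes these legitimately commuting subsemirings. Because $\Hom(A,A)=\Hom(B,B)=\mathcal S$ and $m=n=k$, the matrix semirings $M_n(\Hom(A,A))$ and $M_m(\Hom(B,B))$ both coincide with $M_k(\mathcal S)$, and $M_k(\mathcal M)$ is their common semibimodule; consequently the products $\omega_a\cdot\varphi\cdot\psi_a$, $\omega_b\cdot\varphi\cdot\psi_b$ and the two expressions for $k$ in the \eckap\ are literally those computed in the Sakalauskas et al.\ \kap. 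The equality $k_a=k_b$ is guaranteed by $\omega_a\omega_b=\omega_b\omega_a$ and $\psi_a\psi_b=\psi_b\psi_a$, which hold since both chosen subsemirings consist of commuting matrices.

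The step I expect to demand the most care is not any of these identifications, which are routine once the dictionary is fixed, but the justification that the entire \eckap\ machinery of Subsection \ref{subsec:KAPEC}---originally phrased for enrichment over abelian groups, hence for rings and matrix rings---transfers verbatim to the monoid-enriched, semiring setting. I would therefore point out that no use whatsoever is made of additive inverses: ``unital ring'' is replaced throughout by ``unital semiring'', $M_n(\Hom(A,A))$ by the corresponding matrix semiring, ``commuting subring'' by ``commuting subsemiring'', and both the bilinearity of composition and the rule of standard matrix multiplication remain well defined over an abelian monoid. This is exactly the generalization announced in Remark \ref{rem:abmon}, and with it in hand the theorem follows.
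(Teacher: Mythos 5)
Your construction is exactly the one the paper uses: the two-object category with $\Hom(A,A)=\Hom(B,B)=\mathcal S$, $\Hom(A,B)=\mathcal M$, $\Hom(B,A)=\varnothing$, and composition given by the semiring multiplication and the two actions, followed by the identification of the resulting \eckap\ with the Sakalauskas et al.\ protocol. The paper's proof states this in two sentences and leaves all verifications implicit, so your more careful checking of the enrichment axioms and of the specialization $m=n=k$, $\mathcal A_A=\mathcal B_A$, $\mathcal A_B=\mathcal B_B$ is simply a fuller write-up of the same argument.
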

\begin{proof}
According to the structure of $\mathcal S$-semibimodule $\mathcal M$, one constructs the enriched category over the category of abelian monoids $\mathcal D$ with two objects $A,\; B$ and the following ``Hom-objects'':
$$
\Hom(A,A)={\mathcal S}\;\;\Hom(B,B)={\mathcal S}\;\;\Hom(A,B)={\mathcal M},\;\;\Hom(B,A)=\varnothing,
$$
while the composition is defined by the right and left actions of $\mathcal S$ on $\mathcal M$. Now, it is obvious that the \eckap\ arisen from the enriched category $\mathcal D$ coincides with the Sakalauskas et al. \kap.
\end{proof}

\;

\section{Categorical multi-party \kap}

This section is only devoted of suggesting a multi-party \kap\ based on the structure of a category, and hence showing an advantage of categorical approach to construct easily multi-party \kap s. Further investigation of our categorical multi-party \kap\ and its working examples will appear in our subsequent papers.

Suppose there is a set $S=\{ A_1,A_2, \ldots , A_n\}$ of $n$ users. If they wish to agree a common secret key and for that to use open insecure channels, they can proceed as follows:

\;

\begin{enumerate}
\item[Step 1.] Chose an order in $S$, i.e.   $S=\big( A_1,A_2, \ldots , A_n \big)$ ;
\item[Step 2.] A category $\mathcal C$ is public. For each user $A_i$, $1\le i\le n$, it is chosen an object $C_i\in {\mathcal C}$ publicly and $(n-1)$ elements $\{g_1, \ldots ,g_{n-1}\}$ such that $g_i\in \Hom(C_i,C_{i+1})$;
\item[Step 3.]  Any user $A_i$, $1\le i\le n$, chose randomly an element $f_i\in \Hom(C_i, C_i)$ and computes
\begin{align*}
& g_if_i\quad  & \text{for} & \quad i=1, &  \\
& f_ig_{i-1}\quad  & \text{for} & \quad i=n, &  \\
& f_ig_{i-1} \quad  \text{and} \quad g_if_i & \text{for} & \quad 1<i<n. &
\end{align*}
Then any user $A_i$ sends $g_if_i$ to any other user $A_j$ for $j>i$ and sends $f_ig_{i-1}$ to any other user $A_j$ for $j<i$.
\item[Step 4.] Thanks to the associative law of morphism composition in the category $\mathcal C$ any user $A_i$ computes
\begin{align*}
k_i & = (f_ng_{n-1})\cdots (f_{i+1}g_i) f_i (g_{i-1}f_{i-1})\cdots (g_1f_1)\\
 & = f_ng_{n-1}\cdots f_{i+1}g_i f_i g_{i-1}f_{i-1}\cdots g_1f_1=k
\end{align*}
and obtain a common element $k\in \Hom(C_1,C_n)$.
\end{enumerate}

\;

\begin{bibdiv}
\begin{biblist}

\bib{AnAnGo}{article}{
  author={Anshel, I.},
  author={Anshel, M.},
  author={Goldfeld, D.},
  title={An algebraic method for public-key cryptography},
  journal={Math. Res. Lett.},
  volume={6},
  date={1999},
  pages={287--291},
}

\bib{DiHe}{article}{
  author={Diffie, W.},
  author={Hellman, M.E.},
  title={New directions in cryptography},
  journal={IEEE Trans. Information Theory},
  volume={22},
  date={1976},
  pages={644--654},
}

\bib{DiOoWi}{article}{
  author={Diffie, W.},
  author={van Oorschot, P.C.},
  author={Wiener, M. J.},
  title={Authentication and Authenticated Key Exchanges},
  journal={Designs, Codes and Cryptography},
  volume={2},
  date={1992},
  pages={107--125},
}

\bib{KoLe}{article}{
  author={Ko, K.H.},
  author={Lee, S.J.},
  author={Cheon, J.H.},
  author={Han, J.W.},
  author={Kang, Ju.S.},
  author={Park, Ch.}
  title={New public-key cryptosystem using braid group},
  booktitle={Advances in cryptology---{CRYPTO} 2000 (M. Bellare, ed.), Lecture Notes in Comput. Sci. 1880, Springer},
  volume={1880},
  date={2000},
  pages={166--183},
}

\bib{MLa}{book}{
  author={Mac Lane, S.},
  title={Categories for the Working Mathematician},
  series={Graduate Texts in Math.},
  volume={5},
  publisher={Springer-Verlag},
  place={New York, Berlin},
  date={1971},
  pages={},
  isbn={},
  isbn={},
}

 \bib{MaTaIm}{article}{
  author={Matsumoto, T.},
  author={Takashima, Y.},
  author={Imai, H.},
  title={On Seeking Smart Public-Key Distribution Systems},
  journal={The Transactions of the IEICE of Japan},
  volume={E. 69 (2)},
  date={1986},
  pages={99--106},
}

\bib{MeQuVa}{article}{
  author={Menezes, A.},
  author={Qu, M.},
  author={Vanstone, S.},
  title={Some New Key Agreement Protocols Providing Implicit Authentication},
  booktitle={Proceedings of the 2nd Workshop on Selected Areas in Cryptography (SAC '95), Springer-Verlag},
   date={1995},
  pages={22--32},
}

\bib{SaLiTva}{article}{
  author={Sakalauskas, E.},
  author={Listopadskis, N.},
  author={Tvarijonas, P.},
  title={Key agreement protocol (KAP) based on matrix power function},
  journal={Advanced Studies in Software and Knowledge Engineering},
  volume={},
  date={2008},
  pages={92--96},
}

\bib{Ya}{article}{
  author={Yacobi, Y.},
  title={A Key Distribution ``Paradox''},
   booktitle={Advances in cryptology---CRYPTO '90 (A.J. Menezes  and S.A. Vanstone, eds.), Lecture Notes in Comput. Sci. 537, Springer},
  date={1991},
  pages={268--273},
}

\end{biblist}
\end{bibdiv}
\end{document}